\documentclass[journal]{IEEEtran}
\usepackage{amsmath}
\usepackage{amssymb}
\usepackage{amsfonts}
\usepackage{graphicx}
\usepackage{epsfig}
\usepackage{subfigure}
\usepackage{psfrag}
\usepackage{color}
\usepackage{overpic}
\hyphenation{op-tical net-works semi-conduc-tor}

\begin{document}

\title{Investigation on Multiuser Diversity in Spectrum Sharing Based Cognitive Radio Networks}

\author{Rui Zhang,~\IEEEmembership{Member,~IEEE}, and Ying-Chang
Liang,~\IEEEmembership{Senior Member,~IEEE}
\thanks{The authors are with the Institute for Infocomm Research, A*STAR, Singapore (Email: \{rzhang,
ycliang\}@i2r.a-star.edu.sg).}\vspace{-0.2in}}

\maketitle

\begin{abstract}
A new form of multiuser diversity, named \emph{multiuser
interference diversity}, is investigated for opportunistic
communications in cognitive radio (CR) networks by exploiting the
mutual interference between the CR and the existing primary radio
(PR) links. The multiuser diversity gain and ergodic throughput are
analyzed for different types of CR networks and compared against
those in the conventional networks without the PR link.
\end{abstract}

\begin{keywords}
Cognitive radio, interference temperature, multiuser diversity,
spectrum sharing.
\end{keywords}

\IEEEpeerreviewmaketitle

\newtheorem{claim}{Claim}
\newtheorem{guess}{Conjecture}
\newtheorem{definition}{Definition}
\newtheorem{fact}{Fact}
\newtheorem{assumption}{Assumption}
\newtheorem{theorem}{\underline{Theorem}}[section]
\newtheorem{lemma}{\underline{Lemma}}[section]
\newtheorem{ctheorem}{Corrected Theorem}
\newtheorem{corollary}{\underline{Corollary}}[section]
\newtheorem{proposition}{Proposition}
\newtheorem{example}{\underline{Example}}[section]
\newtheorem{remark}{\underline{Remark}}[section]
\newtheorem{problem}{\underline{Problem}}[section]
\newtheorem{algorithm}{\underline{Algorithm}}[section]
\newcommand{\mv}[1]{\mbox{\boldmath{$ #1 $}}}

\section{Introduction}
\PARstart{C}ognitive radio (CR) is a promising technology for
efficient spectrum utilization in the future wireless communication
systems. The main design objective for a CR network is to maximize
its throughput while providing sufficient protection to the existing
primary radio (PR) network. There are two basic operation models for
CRs: {\it opportunistic spectrum access} (OSA), where the CR is
allowed to transmit over a frequency band only when all the PR
transmissions are off, and {\it spectrum sharing} (SS), where the CR
can transmit concurrently with PRs, provided that it knows how to
control the resultant interference powers at PRs below a tolerable
threshold \cite{Zhang09}. On the other hand, ``multiuser diversity
(MD)'' \cite{Knopp95,Tse02}, as a fundamental property of wireless
networks, has been widely applied for opportunistic communications
in wireless systems. The conventional form of MD is usually
exploited in a wireless system with multiple independent-fading
communication links by selecting one link with the best
instantaneous channel condition to transmit at one time, also known
as {\it dynamic time-division multiple-access} (D-TDMA).

In this letter, we consider SS-based fading CR networks and
investigate for them a new form of MD, named {\it multiuser
interference diversity} (MID), which is due to the CR and PR mutual
interference and thus different from the conventional MD. More
specifically, ``transmit MID (T-MID)'' is due to the fact that the
maximum transmit powers of different CR transmitters vary with their
independent-fading channels to a PR receiver under a given
interference power constraint, while ``receive MID (R-MID)'' is due
to the fact that the PR interference powers at different CR
receivers vary with their independent-fading channels from a PR
transmitter. This letter studies the MID for various types of CR
networks with D-TDMA, including the multiple-access channel (MAC),
broadcast channel (BC), and parallel-access channel (PAC), and
analyzes the achievable {\it multiuser diversity gain} (MDG) as a
function of the number of CRs, $K$. It is shown that due to the
newly discovered MID, the MDG (as defined in this letter) for each
CR network under consideration is no smaller than that in the
conventional network without the PR link, which holds for arbitrary
fading channel distributions and values of $K$. On the other hand,
it is also shown that the MID results in the same asymptotic growth
order over $K$ for the CR network ergodic throughput as that due to
the MD for the conventional network, as $K$ goes to infinity.

It is noted that MD for the fading CR networks has been recently
studied in, e.g., \cite{Ban09} and \cite{Tajer}. In \cite{Ban09}, MD
was analyzed for the fading CR MAC with a given $K$ at the
asymptotic regime where the ratio between the CR transmit power
constraint and interference power constraint goes to infinity. In
\cite{Tajer}, MD was investigated for the fading CR PAC with the
constant CR transmit power at the asymptotic region with
$K\rightarrow \infty$. In this letter, we study MD for different
types of CR networks with arbitrary CR transmit and interference
power constraints, and arbitrary number of CR users.

\section{System Model}\label{sec:system model}

Consider a SS-based CR network coexisting with a PR network. For the
propose of exposition, only one active PR link consisting of a PR
transmitter (PR-Tx) and a PR receiver (PR-Rx) is considered. All
terminals in the network are assumed to be each equipped with a
single antenna. We consider the {\it block-fading} (BF) model for
all the channels involved and coherent communications; thus, only
the fading channel power gains (amplitude squares) are of interest.
It is assumed that the additive noises at all PR and CR receive
terminals are independent circular symmetric complex Gaussian (CSCG)
random variables each having zero mean and unit variance. We
consider three types of CR networks described as follows.
\begin{itemize}
\item {\it Cognitive MAC (C-MAC)}, where $K$ CRs, denoted by CR$_1$, CR$_2,\ldots$, CR$_K$, transmit
independent messages to a CR base station (CR-BS). Denote $\mv{h}_k$
as the power gain of the fading channel from CR$_k$ to CR-BS,
$k=1,\ldots,K$.  Similarly, $\mv{g}_k$ is defined for the fading
channel from CR$_k$ to PR-Rx, $\mv{f}$ is for that from PR-Tx to
PR-Rx, and $\mv{e}$ is for that from PR-Tx to CR-BS.

\item {\it Cognitive BC (C-BC)}, where CR-BS sends independent messages to $K$ CRs.
Channel reciprocity is assumed between the C-MAC and C-BC; thus, the
fading channel power gain from CR-BS to CR$_k$ in the C-BC is the
same as $\mv{h}_k$ in the C-MAC. In addition, $\mv{g}$ is defined as
the fading channel power gain from CR-BS to PR-Rx, and $\mv{e}_k$ is
defined for the channel from PR-Tx to CR$_k$.

\item {\it Cognitive PAC (C-PAC)}, where $K$ distributed CR transmitters, denoted by CR-Tx$_1$, $\ldots$,
CR-Tx$_K$, transmit independent messages to the corresponding
receivers, denoted by CR-Rx$_1$, $\ldots$, CR-Rx$_K$, respectively.
Similarly like the C-MAC and C-BC, we denote $\mv{h}_k$ as the power
gain of the fading channel from CR-Tx$_k$ to CR-Rx$_k$,
$k=1,\ldots,K$; $\mv{g}_k$ and $\mv{e}_k$ are defined for the fading
channels from CR-Tx$_k$ to PR-Rx and from PR-Tx to CR-Rx$_k$,
respectively.
\end{itemize}

Let $J$ be the {\it peak} (with respect to fading states) transmit
power constraint at CR-BS in the C-BC, and $P_k$ be that at CR$_k$
in the C-MAC or at CR-Tx$_k$ in the C-PAC. It is assumed that PR-Tx
transmits with a constant power $Q$; and each CR transmit terminal
protects the PR by applying the {\it peak} interference power
constraint at PR-Rx, denoted by $\Gamma$. Let $p_k$ be the transmit
power for CR $k$ in the C-MAC or C-PAC, and $p$ be that for CR-BS in
the C-BC. Let $g_k$ be one realization of $\mv{g}_k$ for a
particular fading state (similar notations apply for the other
channels). Combining both the transmit and interference power
constraints, we obtain
\begin{align}\label{eq:power}
p_k \leq\min(P_k,\Gamma/g_k), \forall k, ~ p \leq \min(J,\Gamma/g).
\end{align}
The maximum achievable receiver signal-to-noise ratio (SNR) of CR
$k$ can then be expressed as
\begin{align}\label{eq:SNR}
\gamma_k^{\rm (MAC)}=\frac{h_kp_k}{1+Q e}, \gamma_k^{\rm
(BC)}=\frac{h_kp}{1+Qe_k}, \gamma_k^{\rm
(PAC)}=\frac{h_kp_k}{1+Qe_k}
\end{align}
for the C-MAC, C-BC, and C-PAC, respectively. Note that the noise at
each CR receive terminal includes both the additive Gaussian noise
and the interference from PR-Tx. We are then interested in
maximizing the long-term system throughput in each CR network by
adopting the following D-TDMA rule: At one particular fading state,
CR $k$ is selected for transmission if it has the largest achievable
receiver SNR among all the CRs. Let $k^*$ denote the selected user
at this fading state. From (\ref{eq:power}) and (\ref{eq:SNR}), it
then follows that
\begin{align}
k^*_{\rm MAC}=&\arg\max_{k\in\{1,\ldots,K\}} \frac{h_k\min(P_k,\Gamma/g_k)}{1+Qe} \label{eq:optimal k MAC}\\
k^*_{\rm BC}=&\arg\max_{k\in\{1,\ldots,K\}} \frac{h_k\min(J,\Gamma/g)}{1+Qe_k}  \label{eq:optimal k BC} \\
k^*_{\rm PAC}=&\arg\max_{k\in\{1,\ldots,K\}}
\frac{h_k\min(P_k,\Gamma/g_k)}{1+Qe_k}  \label{eq:optimal k PAC}
\end{align}
for the C-MAC, C-BC, and C-PAC, respectively. By substituting $k^*$
for each CR network into (\ref{eq:SNR}), the corresponding maximum
receiver SNR over CR users is obtained as $\gamma_{\rm MAC}(K)$,
$\gamma_{\rm BC}(K)$, or $\gamma_{\rm PAC}(K)$, each as a function
of $K$.

\section{Multiuser Interference Diversity}\label{sec:MU diversity}

In this section, we study the MD for different CR networks under a
set of ``symmetric'' assumptions, where all $\mv{h}_k$'s are assumed
to have the same distribution, so are $\mv{g}_k$'s and $\mv{e}_k$'s;
and $P_k=P, \forall k$. We then define $\bar{\gamma}_{\rm
MAC}(K)\triangleq\mathbb{E}[\gamma_{\rm
MAC}(K)]/\mathbb{E}[\gamma_{\rm MAC}(1)]$ as the {\it MDG} for the
C-MAC; $\bar{\gamma}_{\rm BC}(K)$ and $\bar{\gamma}_{\rm PAC}(K)$
are similarly defined for the C-BC and C-PAC, respectively. From
(\ref{eq:optimal k MAC}), (\ref{eq:optimal k BC}), and
(\ref{eq:optimal k PAC}), it follows that
\begin{align}
\bar{\gamma}_{\rm MAC}(K)&=\kappa_{\rm
MAC}\mathbb{E}[\max_k h_k\min(P,\Gamma/g_k)] \label{eq:MD gain MAC}\\
\bar{\gamma}_{\rm BC}(K)&=\kappa_{\rm
BC}\mathbb{E}[\max_k h_k/(1+Qe_k)] \label{eq:MD gain BC} \\
\bar{\gamma}_{\rm PAC}(K)&=\kappa_{\rm PAC}\mathbb{E}[\max_k
h_k\min(P,\Gamma/g_k)/(1+Qe_k)] \label{eq:MD gain PAC}
\end{align}
where
\begin{align}
\kappa_{\rm MAC} &
=1/(\mathbb{E}[h_k]\mathbb{E}[\min(P,\Gamma/g_k)]) \\
\kappa_{\rm BC}&=1/(\mathbb{E}[h_k]\mathbb{E}[1/(1+Qe_k)])
\\ \kappa_{\rm
PAC}&=1/(\mathbb{E}[h_k]\mathbb{E}[\min(P,\Gamma/g_k)]\mathbb{E}[1/(1+Qe_k)])
\label{eq:kappa PAC}
\end{align}
are constants, which can be shown independent of $k$ due to the
symmetric assumptions. In order to fairly compare the MDG in each CR
network with that in the conventional network, we introduce a {\it
reference} network by removing the PR link in each CR network. For
such reference networks, since there is no interference power
constraint nor interference from PR-Tx to CR terminals, it is easy
to show that CR $k$ with the largest $h_k$ among all the CRs should
be scheduled for transmission at each fading state due to D-TDMA.
Thus, we define the MDG of the reference network as
\begin{equation}\label{eq:MD gain}
\bar{\gamma}_0(K)=\kappa_0\mathbb{E}[\max_k h_k]
\end{equation}
where $\kappa_0=1/\mathbb{E}[h_k]$ is a constant, which is
independent of $k$.

From (\ref{eq:MD gain MAC})-(\ref{eq:MD gain}), it follows that the
MDGs for different CR networks all differ from the conventional MDG
for the reference network. We highlight their differences as
follows.
\begin{itemize}
\item For the C-MAC, it is observed from (\ref{eq:MD gain MAC}) that
the MDG is obtained by taking the maximum product between $h_k$ and
$\min(P,\Gamma/g_k)$ over all $k$'s, where the former also exists in
the conventional MDG given in (\ref{eq:MD gain}), while the latter
is a new term due to independent $g_k$'s over which CR transmitters
interfere with PR-Rx, thus named {\it T-MID}.

\item For the C-BC, it is observed from (\ref{eq:MD gain BC}) that
the MDG is obtained by taking the maximum product between $h_k$ and
$1/(1+Qe_k)$ over CRs, where the former term contributes to the
conventional MDG, while the latter term is a new source of diversity
due to independent $e_k$'s over which PR-Tx interferes with CR
receivers, thus named {\it R-MID}.

\item For the C-PAC, it follows from (\ref{eq:MD gain PAC}) that in
addition to the conventional MD, there are {\it combined} T-MID and
R-MID.
\end{itemize}

Next, we show the following theorem, which  says that the MDG of
each CR network for a given $K$ is lower-bounded by that of the
reference network, $\bar{\gamma}_0(K)$; and is upper-bounded by a
constant (greater than one) multiplication  of $\bar{\gamma}_0(K)$.

\begin{theorem}\label{theorem:MDG bounds} Under
the symmetric assumptions, we have $\bar{\gamma}_0(K)\leq
\bar{\gamma}_{\rm MAC}(K) \leq \alpha_{\rm MAC}\bar{\gamma}_0(K)$,
$\bar{\gamma}_0(K)\leq \bar{\gamma}_{\rm BC}(K) \leq \alpha_{\rm
BC}\bar{\gamma}_0(K)$, and $\bar{\gamma}_0(K)\leq \bar{\gamma}_{\rm
PAC}(K) \leq \alpha_{\rm PAC}\bar{\gamma}_0(K)$, where $\alpha_{\rm
MAC}=P/\mathbb{E}[\min(P,\Gamma/g_k)]$, $\alpha_{\rm
BC}=1/\mathbb{E}[1/(1+Qe_k)]$, and $\alpha_{\rm PAC}=\alpha_{\rm
MAC}\cdot\alpha_{\rm BC}$ are all constants.
\end{theorem}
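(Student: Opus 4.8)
The plan is to notice that the three MDG expressions in (\ref{eq:MD gain MAC})--(\ref{eq:MD gain PAC}) are instances of one template: each is of the form $\mathbb{E}[\max_k h_kY_k]\big/\big(\mathbb{E}[h_k]\,\mathbb{E}[Y_k]\big)$ for a suitable nonnegative auxiliary variable $Y_k$ that is i.i.d.\ across $k$ and independent of $h_k$ --- with $Y_k=\min(P,\Gamma/g_k)$ for the C-MAC, $Y_k=1/(1+Qe_k)$ for the C-BC, and $Y_k=\min(P,\Gamma/g_k)/(1+Qe_k)$ for the C-PAC. (For the C-PAC one additionally uses that $g_k$ and $e_k$ are independent, so $\mathbb{E}[Y_k]=\mathbb{E}[\min(P,\Gamma/g_k)]\,\mathbb{E}[1/(1+Qe_k)]$, which is exactly the factorization appearing in $\kappa_{\rm PAC}$.) So I would reduce the whole theorem to a single lemma about i.i.d.\ pairs $(h_k,Y_k)$ with $h_k\ge 0$ and $0\le Y_k\le c$ a.s.\ for a constant $c$:
\[
\mathbb{E}\big[\max_k h_k\big]\,\mathbb{E}[Y_k]\ \le\ \mathbb{E}\big[\max_k h_kY_k\big]\ \le\ c\,\mathbb{E}\big[\max_k h_k\big].
\]
Dividing through by $\mathbb{E}[h_k]\,\mathbb{E}[Y_k]$ turns this into $\bar{\gamma}_0(K)\le\bar{\gamma}_\bullet(K)\le (c/\mathbb{E}[Y_k])\,\bar{\gamma}_0(K)$, and I would finish by checking that $c/\mathbb{E}[Y_k]$ is precisely $\alpha_{\rm MAC}$ (with $c=P$), $\alpha_{\rm BC}$ (with $c=1$), and $\alpha_{\rm PAC}$ (with $c=P$); in particular each $\alpha_\bullet\ge 1$ because $\min(P,\Gamma/g_k)\le P$ and $1/(1+Qe_k)\le 1$.

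For the upper half of the lemma I would simply use pointwise domination: $0\le Y_k\le c$ gives $h_kY_k\le c\,h_k$ for every $k$, hence $\max_k h_kY_k\le c\max_k h_k$ on every sample path, and taking expectations does it.

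For the lower half I would exploit that the naive rule ``serve the CR with the largest $h_k$'' is a feasible, generally suboptimal choice for the true objective $\max_k h_kY_k$. Let $k^\star=\arg\max_k h_k$, which is a.s.\ well defined for continuous fading (the tie event is null and may be broken arbitrarily). Then $\max_k h_kY_k\ge h_{k^\star}Y_{k^\star}=(\max_k h_k)\,Y_{k^\star}$. Since $k^\star$ is a function of $(h_1,\dots,h_K)$ alone, which is independent of $(Y_1,\dots,Y_K)$, and the $Y_k$ are identically distributed, conditioning on $(h_1,\dots,h_K)$ gives $\mathbb{E}[Y_{k^\star}\mid h_1,\dots,h_K]=\mathbb{E}[Y_1]$ --- equivalently, expand $Y_{k^\star}=\sum_j \mathbf{1}\{k^\star=j\}Y_j$ and note $\mathbf{1}\{k^\star=j\}$ is independent of $Y_j$. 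Hence $\mathbb{E}[\max_k h_kY_k]\ge\mathbb{E}\big[(\max_k h_k)\,\mathbb{E}[Y_{k^\star}\mid h_1,\dots,h_K]\big]=\mathbb{E}[\max_k h_k]\,\mathbb{E}[Y_1]$, which is the claimed lower inequality.

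The only real obstacle is the independence bookkeeping in the lower bound: the crux is that the $h$-only scheduling rule is statistically independent of the interference-induced factors $Y_k$, and this is exactly what lets the expectation of the product factor into a product of expectations. Everything else --- the upper bound, and the substitution of the three choices of $Y_k$ and $c$ to read off $\alpha_{\rm MAC}$, $\alpha_{\rm BC}$, $\alpha_{\rm PAC}$ --- is routine pointwise inequality and algebra.
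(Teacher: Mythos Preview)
Your argument is correct and follows essentially the same route as the paper: the lower bound via the suboptimal ``schedule the largest $h_k$'' rule plus independence of the interference factors, and the upper bound via the pointwise cap $Y_k\le c$. Two small differences worth noting: your unified lemma handles all three networks at once rather than repeating the C-PAC argument, and your upper bound $\max_k h_kY_k\le c\max_k h_k$ is one step shorter than the paper's, which first passes through $(\max_k h_k)(\max_k Y_k)$ and an extra independence factorization before invoking $Y_k\le c$; your version avoids that detour and does not even need independence of $h$ and $Y$ for the upper half.
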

\begin{proof}
We only show the proof for the C-PAC case, while similar proofs can
be obtained for the C-MAC and C-BC and are thus omitted here. First,
we consider the lower bound on $\bar{\gamma}_{\rm PAC}(K)$. By
denoting $k'$ as the user with the largest $h_k$ among all the CRs,
it follows that
\begin{align}
\bar{\gamma}_{\rm PAC}(K)&\overset{(a)}{\geq} \kappa_{\rm
PAC}\mathbb{E}[
h_{k'}\min(P,\Gamma/g_{k'})/(1+Qe_{k'})] \nonumber \\
&\overset{(b)}{=}\kappa_{\rm PAC}\mathbb{E}[
h_{k'}]\mathbb{E}[\min(P,\Gamma/g_{k'})]\mathbb{E}[1/(1+Qe_{k'})]
\nonumber \\ &\overset{(c)}{=} \bar{\gamma}_0(K) \nonumber
\end{align}
where $(a)$ is due to the fact that $k'$ is in general not the
optimal $k^*$ corresponding to the largest
$h_k\min(P,\Gamma/g_k)/(1+Qe_k)$ in (\ref{eq:MD gain PAC}); $(b)$ is
due to independence of channels $h_{k'}$, $g_{k'}$, and $e_{k'}$;
and $(c)$ is due to (\ref{eq:kappa PAC}) and (\ref{eq:MD gain}).
Next, we consider the upper bound on $\bar{\gamma}_{\rm PAC}(K)$.
The followings can be shown.
\begin{align}
\bar{\gamma}_{\rm PAC}(K)&\overset{(a)}{\leq} \kappa_{\rm PAC}
\mathbb{E}[\max_k
h_k \cdot \max_k\min(P,\Gamma/g_k)/(1+Qe_k)] \nonumber \\
&\overset{(b)}{=}\kappa_{\rm PAC} \mathbb{E}[\max_k
h_k]\mathbb{E}[\max_k\min(P,\Gamma/g_k)/(1+Qe_k)] \nonumber \\
&\overset{(c)}{\leq} \kappa_{\rm PAC} \mathbb{E}[\max_k h_k] P
\nonumber \\ &\overset{(d)}{=}
\bar{\gamma}_0(K)P/(\mathbb{E}[\min(P,\Gamma/g_k)]\mathbb{E}[1/(1+Qe_k)])\nonumber
\end{align}
where $(a)$ is due to the fact that the user with the largest $h_k$
is not necessarily the one with the largest
$\min(P,\Gamma/g_k)/(1+Qe_k)$; $(b)$ is due to independence of $h_k$
and $(g_k,e_k)$; $(c)$ is due to the fact that
$\min(P,\Gamma/g_k)/(1+Qe_k)\leq P, \forall k$; and $(d)$ is due to
(\ref{eq:kappa PAC}) and (\ref{eq:MD gain}). Using the definitions
of $\alpha_{\rm MAC}$ and $\alpha_{\rm BC}$ given in Theorem
\ref{theorem:MDG bounds}, it follows that $\bar{\gamma}_{\rm PAC}(K)
\leq \alpha_{\rm PAC}\bar{\gamma}_0(K)$, where $\alpha_{\rm
PAC}=\alpha_{\rm MAC}\cdot\alpha_{\rm BC}$.
\end{proof}

At last, we study the {\it ergodic throughput} of each D-TDMA based
CR network and its asymptotic growth order over $K$ as
$K\rightarrow\infty$. For the C-MAC, the ergodic throughput for a
given $K$ is defined as $C_{\rm
MAC}(K)=\mathbb{E}[\log_2(1+\gamma_{\rm MAC}(K))]$; similarly,
$C_{\rm BC}(K)$, $C_{\rm PAC}(K)$, and $C_0(K)$ are defined for the
C-BC, C-PAC, and reference network, respectively. According to the
extreme value theory \cite[Appendix A]{Sharif05}, it is known that
$C_0(K)$ behaves as $\frac{C_0(K)}{\log_2F(K)}\rightarrow 1$ as
$K\rightarrow\infty$, where $F(K)$ is given by the distribution of
$\max_{k}\mv{h}_k$ as $K\rightarrow\infty$ (e.g., for ``type i''
distribution of $\mv{h}_k$ with unit-variance, $F(K)=\log K$
\cite{Sharif05}). Then, from Theorem \ref{theorem:MDG bounds}, the
following corollary can be obtained (proof is omitted here due to
the space limitation).
\begin{corollary}\label{corollary:asymptotics}
Under the symmetric assumptions, as $K\rightarrow \infty$, we have
$\frac{C_{\rm MAC}(K)}{\log_2 F(K)}=1$,  $\frac{C_{\rm
BC}(K)}{\log_2 F(K)}=1$, and $\frac{C_{\rm PAC}(K)}{\log_2 F(K)}=1$.
\end{corollary}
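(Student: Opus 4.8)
The plan is to sandwich each $C_{\rm net}(K)$, with ${\rm net}\in\{{\rm MAC},{\rm BC},{\rm PAC}\}$, between two sequences that both grow like $\log_2 F(K)$, by reusing the \emph{pointwise} SNR inequalities that underlie Theorem~\ref{theorem:MDG bounds} (rather than their expectations) together with the stated extreme-value fact $C_0(K)/\log_2 F(K)\to 1$. I write $P_{\max}$ for the relevant peak transmit power ($P$ for the C-MAC and C-PAC, $J$ for the C-BC), so that the corresponding reference-network throughput is $C_0(K)=\mathbb E[\log_2(1+P_{\max}\max_k h_k)]$, which diverges since $\log_2 F(K)\to\infty$. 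A preliminary observation I would record first: because $\log_2(1+ab)\le\log_2(1+a)+\log_2(1+b)$ makes $|\mathbb E[\log_2(1+t\max_k h_k)]-C_0(K)|$ bounded by a constant depending only on $t>0$ and $P_{\max}$, it follows that $\mathbb E[\log_2(1+t\max_k h_k)]/\log_2 F(K)\to 1$ for \emph{every} fixed $t>0$; i.e.\ a fixed multiplicative constant inside the log is asymptotically negligible against $\log_2 F(K)$.

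For the upper bound I would note that in all three networks the interference-diversity factor multiplying $h_k$ in the selected-user SNR is at most $P_{\max}$ pointwise ($\min(P,\Gamma/g_k)/(1+Qe_k)\le P$ for the C-PAC, $\min(J,\Gamma/g)/(1+Qe_k)\le J$ for the C-BC, $\min(P,\Gamma/g_k)/(1+Qe)\le P$ for the C-MAC), so $\gamma_{\rm net}(K)\le P_{\max}\max_k h_k$; that is, $\gamma_{\rm net}(K)$ is dominated pointwise by the reference SNR, whence $C_{\rm net}(K)\le C_0(K)$ and $\limsup_K C_{\rm net}(K)/\log_2 F(K)\le 1$.

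For the lower bound I would restrict the D-TDMA rule to the user $k'=\arg\max_k h_k$; its SNR equals $(\max_k h_k)\,\tau_{k'}$, where $\tau_{k'}$ is that user's interference-diversity factor (e.g.\ $\tau_{k'}=\min(P,\Gamma/g_{k'})/(1+Qe_{k'})$ for the C-PAC, with the obvious variants for the C-MAC and C-BC). Under the symmetric assumptions the PR-link channels entering $\tau_{k'}$ are independent of $(h_1,\dots,h_K)$ and i.i.d.\ across the user index, so $\tau_{k'}$ is strictly positive a.s., has a law not depending on $K$, and is independent of $\max_k h_k$. Hence
\[
\frac{C_{\rm net}(K)}{\log_2 F(K)}\ \ge\ \frac{\mathbb E\!\left[\log_2\!\big(1+\tau_{k'}\max_k h_k\big)\right]}{\log_2 F(K)}\ =\ \mathbb E_{\tau}\!\left[\,g_K(\tau)\,\right],\qquad g_K(t):=\frac{\mathbb E_h\!\left[\log_2(1+t\max_k h_k)\right]}{\log_2 F(K)}\ \ge\ 0 .
\]
By the preliminary observation $g_K(t)\to 1$ for every fixed $t>0$, so $g_K(\tau)\to 1$ almost surely, and Fatou's lemma gives $\liminf_K\mathbb E_\tau[g_K(\tau)]\ge\mathbb E_\tau[\liminf_K g_K(\tau)]=1$. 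Combining with the upper bound yields $C_{\rm net}(K)/\log_2 F(K)\to 1$ for each network.

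The step I expect to be the main obstacle is this lower bound, precisely because $\tau_{k'}$ can be arbitrarily small with positive probability: the naive truncation $\log_2(1+\tau_{k'}\max_k h_k)\ge \mathbf 1(\tau_{k'}\ge\delta)\log_2(1+\delta\max_k h_k)$ only delivers $\liminf_K C_{\rm net}(K)/\log_2 F(K)\ge\Pr[\tau_{k'}\ge\delta]$, which is strictly less than $1$. Avoiding this loss is exactly what forces one to keep $\tau$ inside the expectation and interchange limit and integral via Fatou, using only that $\tau_{k'}>0$ a.s.; the remaining ingredients (the independence bookkeeping and the elementary log inequalities) are routine.
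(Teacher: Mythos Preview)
Your argument is correct and matches the route the paper indicates: the corollary is stated as a consequence of Theorem~\ref{theorem:MDG bounds} with the proof omitted, and the natural (indeed only viable) way to extract it is precisely what you do---reuse the \emph{pointwise} SNR sandwich from that theorem's proof (suboptimal user $k'=\arg\max_k h_k$ below, and $\min(P,\Gamma/g_k)/(1+Qe_k)\le P$ above) rather than its expectation statement, and then pass to the log. Your Fatou step cleanly handles the only delicate point, namely that the random factor $\tau_{k'}$ can be arbitrarily small; one minor wording fix is that for the C-MAC and C-BC the common channel ($e$ or $g$) makes the $\tau_k$'s merely exchangeable rather than i.i.d., but your conclusion---that $\tau_{k'}$ is independent of $\max_k h_k$ with a $K$-free law---still holds and is all you use.
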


Corollary \ref{corollary:asymptotics} says that the MID results in
the same ergodic-throughput asymptotic growth order over $K$ for the
CR networks  as that of the conventional MD for the reference
network, as $K\rightarrow\infty$, regardless of the fading
distribution.

\section{Numerical Results}

We assume that all the channels involved follow the standard
(unit-power) Rayleigh fading distribution. In addition, for a fair
comparison of different CR networks and the reference network, we
assume that $J=Q=P=\Gamma=1$. Fig. \ref{fig:rate low K} shows the
ergodic throughput for different networks with $K\leq 100$, after
normalizing it to the ergodic throughput with $K=1$ in order to
better examine the MDG. It is observed that the normalized ergodic
throughput for different CR networks is larger than that for the
reference network, thanks to the newly discover MID. It is also
observed that the combined T-MID and R-MID in the C-PAC result in
more substantial throughput gains than T-MID in the C-MAC or R-MID
in the C-BC. Fig. \ref{fig:rate high K} shows the ergodic throughput
(without normalization) of different networks versus $\log_2(\log
K)$ (Note that $F(K)=\log K$ in this case.) for very large values of
$K$ ranging from $10^3$ to $10^6$. It is observed that all these
networks have the same asymptotic growth order over $K$ for the
ergodic throughput, which is in accordance with Corollary
\ref{corollary:asymptotics}.

\begin{figure}
\centering{
 \epsfxsize=3.4in
    \leavevmode{\epsfbox{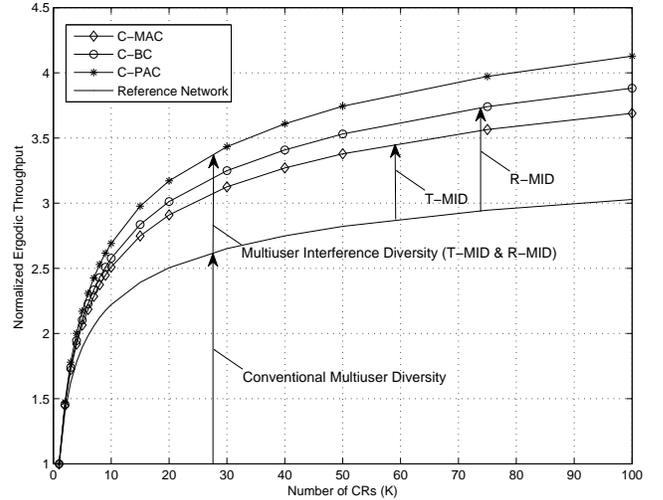}} }\vspace{-0.1in}
\caption{Normalized ergodic throughput for CR networks with $K\leq
100$.}\label{fig:rate low K}
\end{figure}

\begin{figure}
\centering{
 \epsfxsize=3.4in
    \leavevmode{\epsfbox{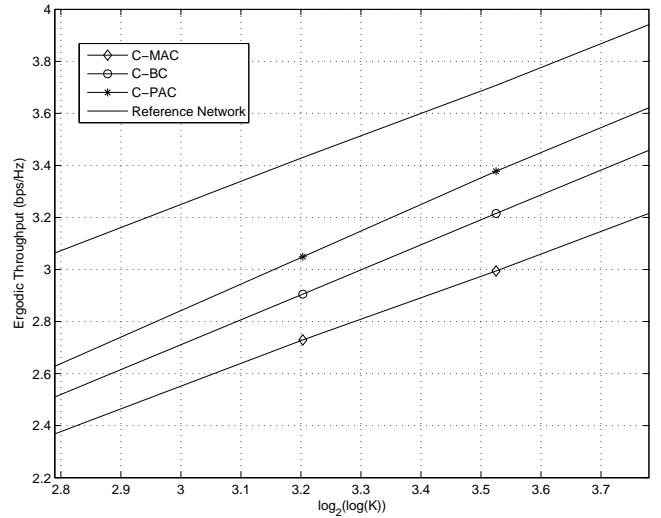}} }\vspace{-0.1in}
\caption{Asymptotic ergodic throughput for CR networks as
$K\rightarrow\infty$.}\label{fig:rate high K}
\end{figure}
\section{Conclusion}\label{sec:conclusion}

This letter quantified a new form of MID for SS-based CR networks by
exploiting the CR and PR mutual interference. The MDG and ergodic
throughput for opportunistic communications in different types of CR
networks were analyzed.

\end{document}